\newtheorem{theorem}{Theorem}
\theoremstyle{definition}
\newtheorem{example}{Example}
\newtheorem{remark}{Remark}
\title{A study on the ideal magnitude and phase of reconstructed point targets in SAR imaging}
\author{\IEEEauthorblockN{
Guanying Sun,
Carey Rappaport\IEEEauthorrefmark{1}
}                                    
\IEEEauthorblockA{\IEEEauthorrefmark{1}Department of Electrical and Computer Engineering, Northeastern University, Boston, USA, rappaport@coe.neu.edu } 
}
\begin{document}
\maketitle

\begin{abstract}
In this paper, the magnitude and phase of the reconstructed point targets in SAR imaging are studied quantitatively by using inverse crime. Two scenarios, 
one with single point target in the imaging area and the other with two point targets,
are considered. 
The theorems on the magnitude and phase are established and proved for each scenario. In addition, several numerical examples are presented and the numerical results show that they agree with the corresponding theorems. 
This study is useful for appreciating the limitations of formulating inversion algorithms based on simplistic point target building blocks.
\end{abstract}

\begin{IEEEkeywords}
SAR imaging,  image reconstruction, inverse crime, point target.
\end{IEEEkeywords}

\IEEEpeerreviewmaketitle

\section{Introduction}
Numerous SAR imaging techniques have been developed over the past several decades \cite{
Carrara,Soumekh,Mata}. In this paper we consider a noiseless, linear imaging system $AX = b$, where $A$ is the sensing matrix, $b$ is the data received by the receiver, and $X$ is the unknown reflectivity vector, i.e, the reconstructed solution. 
Here, b is either measured or numerically calculated.
The sensing matrix $A$ used in this paper was discussed in \cite{Galia} (Ghazi, 2017, p. 26-27), with
elements the form of $e^{-jkL}$, where $k$ is the wave number, $L$ is  the path length from the transmitter to the scattering point, then to the receiver. In this paper, we only consider the configuration of single transmitter and single receiver with multiple frequencies. 

The inverse method we use in this work is the adjoint method: 
$X \approx A^* b$, where $A^*$ is the conjugate transpose of $A$. We calculate $b$ by multiplying sensing matrix $A$ by the exact solution, that is, committing an inverse crime. The inverse crime arises if an inverse problem is solved using a specific method and then tested by solving the forward problem with the same or nearly the same method or vice versa \cite{ColtonKress,Potthast, Hansen}. 
Our work in this paper presents an angle of view which helps understanding the point targets reconstruction in SAR imaging and appreciating the limitations of this simplistic model. 

It is worth mentioning that, since phase is very sensitive to noisy data, it is difficult to make use of phase itself in the signal processing of SAR. For comparison, the variance of phase is more useful in reality. For example, it can be used in reducing the side lobes in acoustic and SAR imaging \cite{Camacho, Baccouche, EuCAP2020}.

The rest of this paper is organized as follows. In Section II, we discuss the reconstruction of single point target and prove a theorem on the magnitude and phase of the reconstructed value. In Section III, the case with two point targets is studied, and four theorems on the magnitude and phase are proposed. Numerical examples are provided in Section IV to verify the theorems proposed in Sections II and III. At last, we draw the main conclusions in Section V.

\section{Reconstruction of single point target}
Assume $M$ frequencies are used throughout this paper, then we have $M$ wave numbers $k_i, i=1,2,...,M$.
We also assume that there are $N$ imaging pixels in the imaging area and the $n$-th pixel locates at $x_n$.
Let $L_n$ represent the total path length from the transmitter to imaging pixel $x_n$, then to the receiver. 
The sensing matrix \cite{Galia} is then given by:
\begin{align}
 A & = 
 \begin{bmatrix}
 e^{-jk_1L_1} & \hspace{-2mm} e^{-jk_1L_2}& \hspace{-3mm} \cdots & \hspace{-3mm}  e^{-jk_1L_N} \\ 
 e^{-jk_2L_1} &\hspace{-2mm} e^{-jk_2L_2} & \hspace{-3mm} \cdots & \hspace{-3mm} e^{-jk_2L_N} \\
 \vdots & \vdots & \hspace{-3mm}\vdots & \vdots \\
e^{-jk_ML_1} &\hspace{-2mm} e^{-jk_ML_2} &\hspace{-3mm} \cdots &\hspace{-3mm} e^{-jk_ML_N} \hspace{-1mm}
\end{bmatrix}_{\hspace{-1mm} M \times N} \nonumber\\
& := [A_1, A_2, \cdots, A_N ],  
\end{align}
where $A_i$ is the $i$-th column of matrix $A$. 
The imaging equation can be written as:
\begin{equation}
AX = b,
\end{equation}
where $b$ is the set of received data,
and $X$ is the set of the reflectivities of all pixels in the imaging area.

In this section, we discuss the reconstruction of single point target. The geometry of the imaging system is shown in Fig.~\ref{fig:Imaging geometry 1}.  The point target at arbitrary point $x_p$  is illuminated by the transmitter (Tx) and the reflected signal is received by the receiver (Rx). 
\begin{figure}[ht]
\centering
\vspace{-4mm}
\includegraphics[width=0.5\columnwidth]{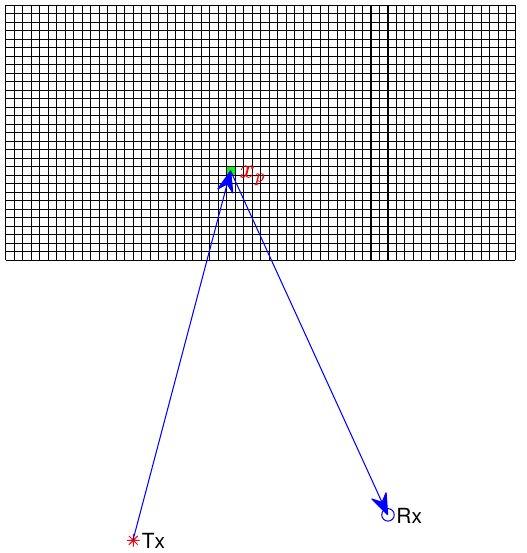}
\caption{Imaging geometry for single point target. }
\label{fig:Imaging geometry 1}
\end{figure}

Next, we prove the following theorem about the magnitude and phase of the reconstructed single point target. 
\begin{theorem}
Assume that single point target locates at arbitrary point $x_p$ of the imaging area, then the magnitude and phase of the reconstructed point target are $M$ and $0$, respectively. 
\end{theorem}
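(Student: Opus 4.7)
The plan is to compute the reconstruction $\tilde{X} = A^* b$ directly under the inverse-crime setup, exploiting the very simple form of $X$. Since there is a single point target at pixel $x_p$ with (implicitly) unit reflectivity, the exact reflectivity vector is $X = e_p$, the standard basis vector with a one in position $p$ and zeros elsewhere. Consequently $b = A X = A_p$, i.e., $b_i = e^{-j k_i L_p}$ for $i = 1,\dots,M$. I would state this normalization explicitly at the outset, since without it the theorem's claim that the magnitude equals $M$ would only be true up to a scalar.

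Next, I would evaluate the reconstructed reflectivity at pixel $x_p$ by reading off the $p$-th entry of $A^* b$. Using the explicit entries of $A$ from the definition, the $(p,i)$ entry of $A^*$ is $\overline{A_{ip}} = e^{j k_i L_p}$, so
\begin{equation}
\tilde{X}_p = \sum_{i=1}^{M} \overline{A_{ip}}\, b_i = \sum_{i=1}^{M} e^{j k_i L_p} e^{-j k_i L_p} = \sum_{i=1}^{M} 1 = M.
\end{equation}
Since $\tilde{X}_p = M$ is a positive real number, its magnitude is $M$ and its phase is $0$, which is exactly the claim.

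Honestly, there is no real obstacle here; the proof reduces to one line of exponential cancellation once the inverse-crime identity $b = AX$ is unwound. The only subtlety worth flagging is the normalization issue mentioned above: for a general complex reflectivity $X_p$ the same computation gives $\tilde{X}_p = M X_p$, so the reconstructed magnitude is $M|X_p|$ and the reconstructed phase is $\arg(X_p)$. I would either state the unit-reflectivity assumption up front or, alternatively, rephrase the theorem so that the magnitude is normalized by the input reflectivity. Either way, the mechanics of the proof are the same three lines displayed above.
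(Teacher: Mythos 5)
Your proposal is correct and follows essentially the same argument as the paper: commit the inverse crime to get $b = A_p$, then evaluate the $p$-th entry of $A^*b$ as $A_p^*A_p = \sum_{i=1}^{M} e^{jk_iL_p}e^{-jk_iL_p} = M$, a positive real number with phase $0$. Your remark on the unit-reflectivity normalization matches the paper's implicit assumption that the exact solution has a $1$ in the $p$-th position.
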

\begin{proof}
The reconstruction method can be described as: $\tilde{X} = A^*b$, where $\tilde{X}$ is the reconstructed solution, and $A^*$ is the conjugate transpose of $A$, i.e.,
\begin{align}
 \mathbf{A^*} = \begin{bmatrix}
 e^{jk_1L_1} &  e^{jk_2L_1} & \cdots &  e^{jk_ML_1} \\
 e^{jk_1L_2} & e^{jk_2L_2} & \cdots & e^{jk_ML_2} \\
 \vdots & \vdots & \vdots & \vdots \\
e^{jk_1L_N} & e^{jk_2L_N} & \cdots & e^{jk_ML_N}  \\
\end{bmatrix}_{N \times M}  
:=\begin{bmatrix}
A_1^* \\
A_2^*\\ 
\vdots\\
A_N^* 
\end{bmatrix},
\end{align}
where $A_i^*$ is the conjugate transpose of $A_i$. 

Since the point target is at $x_p$, the exact solution $X^o$ is a $N \times 1$ unit vector with the $p$-th element being $1$. By inverse crime, the received signal at the receiver $R_x$ is calculated by $b = AX^o = A_p$, where $A_p$ is the $p$-th column of $A$.
Then, we can obtain
\begin{equation}
  \tilde{X} = A^*b = \begin{bmatrix}
A_1^* \\
A_2^*\\ 
\vdots\\
A_N^* 
\end{bmatrix} A_p = 
\begin{bmatrix}
A_1^* A_p\\
A_2^* A_p\\ 
\vdots\\
A_p^*A_p\\
\vdots\\
A_N^* A_p
\end{bmatrix}, \label{eqn4}
\end{equation}
where block multiplication is utilized in the last step. 

The $p$-th element of $\tilde{X}$, i.e., $A_p^*A_p$, is the reconstructed value of the point target at $x_p$. 
Since 
\[
A_p^*A_p = [e^{jk_1L_p}, e^{jk_2L_p}, \cdots, e^{jk_ML_p} ]\begin{bmatrix}
e^{-jk_1L_p}\\
e^{-jk_2L_p}\\ 
\vdots\\
e^{-jk_ML_p}
\end{bmatrix}=M,
\]
the magnitude and phase of the
reconstructed point target are $M$ and $0$, respectively.
\end{proof}
From (\ref{eqn4}), it can be easily seen that the maximum magnitude of the pixels in the imaging area is $M$, thus the reconstructed single point target has the maximum magnitude, corresponding to the highest intensity in the imaging area.

\section{Reconstruction of two point targets}
In this section we discuss the reconstruction of two point targets.
The geometry of the imaging system is shown in Fig.~\ref{fig:Imaging geometry 2}. The frequencies and the partition of the imaging area are the same as in Section II. The point targets $x_p$ and $x_q$  are illuminated by transmitter Tx and the reflected signals are received by the receiver Rx. The interference between the two point targets are neglected. 
By the definition of $L_n$ in Section II, $L_p$ and $L_q$ are the path lengths from the transmitter to the corresponding point target, then to the receiver. 
\begin{figure}[ht]
\centering
\includegraphics[width=0.5\columnwidth]{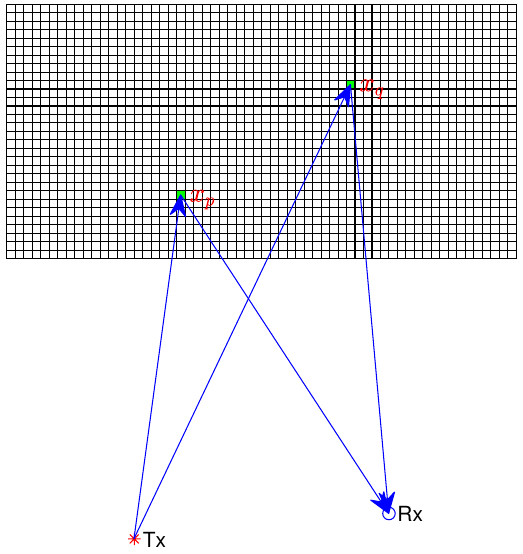}
\caption{Imaging geometry for two point targets. }\vspace{-5mm}
\label{fig:Imaging geometry 2}
\end{figure}

We now establish the following theorem regarding the magnitude and phase of the two reconstructed point targets. 
\begin{theorem}
Assume that two point targets locate at arbitrary points $x_p$ and $x_q$ of the imaging area, then the two reconstructed point targets have the same magnitude and opposite phase.
\end{theorem}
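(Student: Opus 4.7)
The plan is to imitate the inverse-crime calculation used in the proof of Theorem~1, but with an exact solution supported on two pixels. I would take $X^o$ to be the $N\times 1$ vector with $1$ in positions $p$ and $q$ and $0$ elsewhere, so that by inverse crime the received data is
\[
b = AX^o = A_p + A_q.
\]
Applying $A^*$ and using the same block multiplication as in (\ref{eqn4}), the $n$-th entry of the reconstructed vector $\tilde X = A^* b$ is simply $A_n^*(A_p + A_q)$. The reconstructed values at the two targets are therefore
\[
\tilde X_p = A_p^* A_p + A_p^* A_q, \qquad \tilde X_q = A_q^* A_p + A_q^* A_q,
\]
and by Theorem~1 the diagonal inner products $A_p^*A_p$ and $A_q^*A_q$ both equal $M$.

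Next I would examine the cross terms using the explicit form of the columns of $A$. Writing them out gives
\[
A_p^* A_q = \sum_{i=1}^M e^{jk_i(L_p - L_q)}, \qquad A_q^* A_p = \sum_{i=1}^M e^{-jk_i(L_p - L_q)},
\]
so $A_q^* A_p = \overline{A_p^* A_q}$. Setting $S = A_p^* A_q$, the two reconstructed values take the form $\tilde X_p = M + S$ and $\tilde X_q = M + \overline S$. Because $M$ is real, this immediately yields $\tilde X_q = \overline{\tilde X_p}$, and since a complex number and its conjugate share the same modulus and have opposite arguments, the claim of equal magnitude and opposite phase follows.

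I do not anticipate any real obstacle here; the argument is essentially one identity ($A_q^*A_p = \overline{A_p^*A_q}$) plus the observation that $M$ being real makes $\tilde X_p$ and $\tilde X_q$ conjugates. The only minor point to flag is that ``opposite phase'' is well defined only when $\tilde X_p \neq 0$, i.e., when $M + S \neq 0$; since $|S| \le M$ with equality only in degenerate geometries where $k_i(L_p - L_q)$ is constant modulo $2\pi$ across all frequencies, this caveat is harmless in the generic setting considered by the paper.
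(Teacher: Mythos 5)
Your proposal is correct and follows essentially the same route as the paper: compute $b=A_p+A_q$ by the inverse crime, use block multiplication to get $\tilde X_p = M + A_p^*A_q$ and $\tilde X_q = M + A_q^*A_p$, and observe that the cross terms are complex conjugates so the two reconstructed values are conjugates of each other (the paper just writes this out via the real/imaginary decomposition in (\ref{eqn7})--(\ref{eqn8}) instead of your $S$, $\overline{S}$ shorthand). Your side remark about the degenerate case $M+S=0$ is a reasonable extra caution that the paper itself does not address in Theorem~2.
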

\begin{proof}
Since the two point targets are at  $x_p$ and $x_q$, the exact solution is $X^o=[0,...,0,1,0,...,0,1,0,...0]^T$, where  the $p$-th and $q$-th elements are $1$ and the others are $0$. By inverse crime, the response at the receiver is calculated as $b = AX^o = A_p+A_q$, where $A_p$ and $A_q$ are the $p$-th and $q$-th columns of A, respectively. 
Thus we can obtain the reconstructed solution
\begin{equation}
\tilde{X}=A^*b = \begin{bmatrix}
A_1^* \\
A_2^*\\ 
\vdots\\
A_N^* 
\end{bmatrix}b=
\begin{bmatrix}
A_1^* (A_p+A_q)\\
A_2^* (A_p+A_q)\\ 
\vdots\\
A_N^* (A_p+A_q)
\end{bmatrix},
\end{equation}
where block multiplication is applied in obtaining the last equation. 

The $p$-th and $q$-th elements of $\tilde{X}$, i.e.,  $A_p^*(A_p+A_q)$ and $A_q^*(A_p+A_q)$, are the reconstructed values of the two point targets.
It is obvious that $A_p^*A_p=M$ and $A_q^*A_q=M$.
In addition, we have
\begin{align}
A_p^*A_q &= [e^{jk_1L_p}, e^{jk_2L_p},\cdots, e^{jk_ML_p} ]\begin{bmatrix}
e^{-jk_1L_q}\\ e^{-jk_2L_p}\\
\vdots\\
e^{-jk_ML_q}
\end{bmatrix} \nonumber\\
&= \sum_{i=1}^M e^{jk_i (L_p-L_q)},
\end{align}
and $A_q^*A_p=\sum_{i=1}^M e^{jk_i (L_q-L_p)}.
$
Therefore, the $p$-th element of $\tilde{X}$ is
\vspace{-4mm}
\begin{align}
& \hspace{5mm} A_p^*(A_p+A_q)
= M+\sum_{i=1}^M e^{jk_i (L_p-L_q)}\nonumber\\
&= \{M+\sum_{i=1}^M cos(k_i(L_p-L_q))\}+j\sum_{i=1}^M sin(k_i(L_p-L_q)), \label{eqn7}
\end{align}
and
the $q$-th element of $\tilde{X}$ is
\begin{align}
&\hspace{5mm}A_q^*(A_p+A_q)=M+\sum_{i=1}^M e^{jk_i (L_q-L_p)}\nonumber\\
&= \{M+\sum_{i=1}^M cos(k_i(L_p-L_q))\}
-j\sum_{i=1}^M sin(k_i(L_p-L_q))). \label{eqn8}
\end{align}
From (\ref{eqn7}) and (\ref{eqn8}), it can be seen that the two values are conjugate of each other, thus the two reconstructed point targets have the same magnitude and opposite phase. 
\end{proof}

From (\ref{eqn7}) and (\ref{eqn8}), it can be seen that the maximum potential magnitude of the two point targets is $2M$. In the next theorem, we present the condition under which the magnitude is the maximum, i.e., equal to $2M$.
\begin{theorem}
Assume that two point targets locate at arbitrary points $x_p$ and $x_q$ of the imaging area, then the magnitude of the two reconstructed point targets is the maximum ($=2M$) if and only if  $k_i(L_p-L_q)$ is a multiple of $2\pi$, for any $i=1,2,...,M$.
\end{theorem}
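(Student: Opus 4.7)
The plan is to work directly from the closed-form expression in (\ref{eqn7}). Setting $S := \sum_{i=1}^M e^{jk_i(L_p - L_q)}$, the $p$-th reconstructed value equals $M + S$, and by Theorem~2 the $q$-th value has the same modulus, so it suffices to characterize when $|M + S| = 2M$. The natural tool is a two-step triangle inequality, $|M + S| \le M + |S| \le M + \sum_{i=1}^M |e^{jk_i(L_p-L_q)}| = 2M$, which simultaneously proves that $2M$ is the maximum possible magnitude and identifies the two equalities that must be achieved at once.

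The sufficiency direction is immediate: if each $k_i(L_p-L_q) \in 2\pi\mathbb{Z}$, then every $e^{jk_i(L_p-L_q)} = 1$, so $S = M$ and $|M + S| = 2M$. For necessity, I would invoke the standard equality cases. Equality in the inner inequality $|S| \le M$ (triangle inequality for a sum of unit complex numbers) forces all the terms $e^{jk_i(L_p-L_q)}$ to share a common argument $\theta$, yielding $S = Me^{j\theta}$. Equality in the outer inequality $|M+S| \le M + |S|$ then requires the complex numbers $M$ and $S$ to be non-negative real multiples of one another, forcing $\theta \equiv 0 \pmod{2\pi}$. Therefore every $e^{jk_i(L_p-L_q)} = 1$, which is exactly the stated condition.

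There is no serious obstacle in this argument; the entire content is encoded in the equality cases of the triangle inequality for complex numbers. The only small care needed is keeping the two equality cases separate and observing that they impose distinct constraints: the inner one aligns the exponentials among themselves, while the outer one aligns their common direction with the positive real axis. One could alternatively expand $|M+S|^2$ using the real and imaginary parts in (\ref{eqn7}) and maximize via Cauchy--Schwarz, but the triangle-inequality route is shorter and produces the necessary conditions more transparently.
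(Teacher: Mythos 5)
Your argument is correct, but it follows a different route from the paper's. The paper never separates the sum into $M$ plus $S$; instead it absorbs the constant termwise, writing $M+\sum_{i=1}^M e^{j\Phi_i}=\sum_{i=1}^M\bigl(1+e^{j\Phi_i}\bigr)=2\sum_{i=1}^M e^{j\Phi_i/2}\cos\bigl(\tfrac{\Phi_i}{2}\bigr)$ with $\Phi_i=k_i(L_p-L_q)$, and then applies the triangle inequality once to get $M\le\sum_{i=1}^M\bigl|\cos\bigl(\tfrac{\Phi_i}{2}\bigr)\bigr|\le M$, so each $\bigl|\cos\bigl(\tfrac{\Phi_i}{2}\bigr)\bigr|=1$ and hence $\Phi_i\in 2\pi\mathbb{Z}$. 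That half-angle pairing reduces everything to a real-valued bound and only uses the trivial fact that a sum of $M$ numbers, each at most $1$, equals $M$ iff each equals $1$; it never needs to know when the complex triangle inequality is tight. Your proof keeps $M$ and $S$ as separate blocks and instead invokes the equality characterization of the triangle inequality twice: the inner saturation $|S|=M$ aligns all the unit phasors to a common argument $\theta$, and the outer saturation $|M+S|=M+|S|$ forces that common direction onto the positive real axis, so every $e^{j\Phi_i}=1$. Both arguments are short and rigorous; yours makes the geometric mechanism (all phasors must point along the positive real axis) explicit at the cost of quoting the equality case for sums of several complex numbers, while the paper's factorization trick trades that lemma for a small algebraic identity and a purely real extremal argument. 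Your sufficiency direction coincides with the paper's.
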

\begin{proof}
Define $\Phi_i=k_i(L_p-L_q)$.\\
$\Leftarrow$:
If $\Phi_i$ is a multiple of $2\pi $, then 
from (\ref{eqn7}) we have 
$A_p^*(A_p+A_q) = M+\sum_{i=1}^M e^{j\Phi_i}=M+M=2M$.\\
$\Rightarrow$: We have $|M+\sum_{i=1}^M e^{j\Phi_i} |=2M$. 
Since
\begin{align} 
& \hspace{5mm} M+\sum_{i=1}^M e^{j\Phi_i} = \sum_{i=1}^M (1+e^{j\Phi_i})\nonumber\\
&= \sum_{i=1}^M e^{j\frac{\Phi_i}{2}}(e^{-j\frac{\Phi_i}{2}}+e^{j\frac{\Phi_i}{2}})
=2 \sum_{i=1}^M e^{j\frac{\Phi_i}{2}} cos(\frac{\Phi_i}{2}),
 \label{eqn9}
\end{align}
 it can be obtained that
 $M=|\sum_{i=1}^M e^{j\frac{\Phi_i}{2}} cos(\frac{\Phi_i}{2})|$.
 Furthermore, 
  $|\sum_{i=1}^M e^{j\frac{\Phi_i}{2}} cos(\frac{\Phi_i}{2})|\leq\sum_{i=1}^M |e^{j\frac{\Phi_i}{2}} cos(\frac{\Phi_i}{2}) |=\sum_{i=1}^M |cos(\frac{\Phi_i}{2})|\leq M$. 
Therefore, we have $M\leq \sum_{i=1}^M |cos(\frac{\Phi_i}{2})| \leq M$, hence $\sum_{i=1}^M |cos(\frac{\Phi_i}{2})|=M$. 

For $i=1,2,...,M$, $ |cos(\frac{\Phi_i}{2})|\leq 1$, thus
$\sum_{i=1}^M |cos(\frac{\Phi_i}{2})|= M$ holds only if $|cos(\frac{\Phi_i}{2})|= 1$. That is to say,  $cos(\frac{\Phi_i}{2}) = 1$ or $-1$, which implies that  $\frac{\Phi_i}{2}$ is a multiple of $\pi$, i.e., $\Phi_i$ is a multiple of $2\pi$.  
\end{proof}

\begin{remark} Define \textbf{Condition 1}: $k_i(L_p-L_q)$ is a multiple of $2\pi $, $i=1,2,...,M$.  If $L_p=L_q$, then  Condition 1 is satisfied without any restriction on frequency, which leads to the following conclusion: if the two point targets are on an ellipse with two focal points at $T_x$ and $R_x$, then their magnitudes are the maximum (=2M).
\end{remark}

From equations (\ref{eqn7}) and (\ref{eqn8}), we can write $A_p^*(A_p+A_q) =Ce^{i\theta}, A_q^*(A_p+A_q) =Ce^{-i\theta}$, where $C$ is the magnitude, $\theta$ and $-\theta$ are respectively the phases of the two point targets.  
Without loss of generality, in this paper we only consider $C>0$. 
Next, we list two special cases of the phases:\\
(1) if $\theta = 0$, then $-\theta =0$, i.e., the two phases are equal;\\
(2) if $\theta = \pi$, then $-\theta = -\pi$. Since $\pi$ and $-\pi$ are  considered the same in wrapped phase, the phases of the two point targets are equal.\\
In summary, if $\theta = 0$ or $\pi$, then the two point targets have same phase. 

Next, we prove that the case $\theta = \pi$ does not exist  in the two point targets reconstruction proposed in this section.
\begin{theorem}
Let the phases of the two reconstructed point targets be 
$\theta$ and $-\theta$, then $\theta$ can not be $\pi$.
\end{theorem}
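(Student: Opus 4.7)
The plan is to read off the real part of $A_p^*(A_p+A_q)$ from equation (\ref{eqn7}) and show it is non-negative, which is incompatible with $\theta=\pi$ whenever $C>0$.

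First I would recall that by (\ref{eqn7}),
\begin{equation*}
A_p^*(A_p+A_q) = \Bigl\{M+\sum_{i=1}^M \cos(k_i(L_p-L_q))\Bigr\} + j\sum_{i=1}^M \sin(k_i(L_p-L_q)),
\end{equation*}
so the real part equals $M+\sum_{i=1}^M \cos(k_i(L_p-L_q))$. Using the elementary bound $\cos(k_i(L_p-L_q))\ge -1$ termwise, the sum on the right is at least $-M$, and therefore the real part is $\ge 0$.

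Next I would translate the statement $\theta=\pi$ into a contradiction with this inequality. By the setup in the paragraph preceding the theorem, $A_p^*(A_p+A_q)=Ce^{j\theta}$ with $C>0$. If $\theta=\pi$, then $A_p^*(A_p+A_q)=-C$, a strictly negative real number; in particular its real part would be $-C<0$, contradicting the bound derived above. This forces $\theta\ne\pi$. (Even the borderline case where the real part vanishes, i.e.\ $\cos(k_i(L_p-L_q))=-1$ for every $i$, only gives $C=0$, which is excluded by the convention $C>0$; so there is no edge case to patch.)

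I do not expect a main obstacle here: the entire argument is a one-line observation about the sign of the real part combined with the convention $C>0$. The only thing worth flagging explicitly in the write-up is that the non-negativity of the real part comes from the presence of the ``free'' $M$ term in (\ref{eqn7}) that arises from $A_p^*A_p$, which is exactly what prevents the total phase from reaching $\pi$.
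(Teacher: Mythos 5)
Your proof is correct and follows essentially the same route as the paper's: both read off from (\ref{eqn7}) that the real part $M+\sum_{i=1}^M \cos(k_i(L_p-L_q))$ is non-negative, and derive a contradiction with $A_p^*(A_p+A_q)=-C<0$ under the convention $C>0$. Your explicit remark about the borderline case (real part zero forcing $C=0$) is a slight sharpening of the same argument, not a different approach.
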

\begin{proof}
If $\theta = \pi$, then $A_p^*(A_p+A_q)=-C,  A_q^*(A_p+A_q) = -C$. Together with (\ref{eqn7}) or (\ref{eqn8}), we have
\begin{equation}
\vspace{-2mm}
M+\sum_{i=1}^M cos(k_i(L_p-L_q)) =-C<0. \label{Contradiction}
\end{equation}
On the other hand, it is easy to see that $M+\sum_{i=1}^M cos(k_i(L_p-L_q))\geq 0$, thus equation (\ref{Contradiction}) is a contradiction, which implies that $\theta$ can not be $\pi$ in this reconstruction.
\end{proof}

From the above analysis, we know
that the two reconstructed point targets have opposite phase, and and they are the same only when  $\theta =0$. 
The question of when we can find a
condition under which $\theta =0$
is answered in the following theorem.
\begin{theorem} If the trivial case C=0 is excluded from consideration, then 
 $\theta=0$ if and only if $\sum_{i=1}^M sin(k_i(L_p-L_q))=0$.
\end{theorem}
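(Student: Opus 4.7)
The plan is to read off the real and imaginary parts of $A_p^*(A_p+A_q)$ from equation (\ref{eqn7}) and compare them with the polar form $Ce^{j\theta}$. Writing $Ce^{j\theta} = C\cos\theta + jC\sin\theta$ and matching with (\ref{eqn7}) immediately gives the two identifications
\begin{align*}
C\cos\theta &= M + \sum_{i=1}^M \cos(k_i(L_p-L_q)), \\
C\sin\theta &= \sum_{i=1}^M \sin(k_i(L_p-L_q)).
\end{align*}
Under the assumption $C>0$, the sign of $\sin\theta$ and of the imaginary-part sum agree exactly, which is what makes both directions essentially one-line checks.

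For the forward direction, if $\theta=0$ then $\sin\theta=0$, so $C\sin\theta=0$, and hence $\sum_{i=1}^M \sin(k_i(L_p-L_q))=0$. For the converse, assume $\sum_{i=1}^M \sin(k_i(L_p-L_q))=0$; then $C\sin\theta=0$, and since $C>0$ by the standing assumption, $\sin\theta=0$, so $\theta\in\{0,\pi\}$ (working modulo $2\pi$). The case $\theta=\pi$ is precisely what Theorem 4 rules out, so we must have $\theta=0$.

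The only subtlety, and the one step worth stating carefully, is that the equivalence would actually fail without excluding $C=0$: if $C=0$ then $\theta$ is undefined (or conventionally arbitrary) and the imaginary part automatically vanishes for uninteresting reasons. Everything else is bookkeeping, and the real content of the theorem is that Theorem 4 removes the $\theta=\pi$ branch, leaving the vanishing of the sine sum as the sole obstruction to $\theta=0$.
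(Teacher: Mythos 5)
Your proof is correct and follows essentially the same route as the paper: identify the real and imaginary parts of $A_p^*(A_p+A_q)$ from (\ref{eqn7}) with $C\cos\theta$ and $C\sin\theta$, use $C>0$, and note that $C=0$ must be excluded since then $\theta$ is undefined. The only cosmetic difference is in the converse, where you rule out the $\theta=\pi$ branch by citing Theorem 4, whereas the paper argues directly that $A_p^*(A_p+A_q)=M+\sum_{i=1}^M \cos(k_i(L_p-L_q))\geq 0$ together with $C>0$ forces $\theta=0$ --- the same underlying nonnegativity fact that proves Theorem 4.
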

\begin{proof}
$\Rightarrow$: If $\theta =0$, 
$A_p^*(A_p+A_q) =C$. Then by (\ref{eqn7}) we have $\sum_{i=1}^M sin(k_i(L_p-L_q))=0$.

$\Leftarrow$: If $\sum_{i=1}^M sin(k_i(L_p-L_q))=0$, then from (\ref{eqn7}) we have 
$A_p^*(A_p+A_q) = M+\sum_{i=1}^M cos(k_i(L_p-L_q))\geq 0$. In addition, $A_p^*(A_p+A_q) =Ce^{i\theta}$ and $C>0$, thus we obtain $\theta =0$.
\end{proof}

\begin{remark}
 $L_p=L_q$ obviously satisfies \textbf{Condition 2}: $\sum_{i=1}^M sin(k_i(L_p-L_q))=0$, which implies that regardless of the frequency, if the two point targets are on an ellipse with foci at $T_x$ and $R_x$, then their phases are equal to $0$.
\end{remark}

From Theorems 2-5, we know that the magnitudes of the two reconstructed point targets are the same, and only when Condition 1 is satisfied, they are maximum and equal to $2M$; the two phases are opposite, and only when Condition 2 is satisfied, they are equal $(=0)$.
Condition 1 is a special case of Condition 2, thus if Condition 1 holds, then Condition 2 holds as well, and if Condition 2 does not hold, then Condition 1 does not hold either. Therefore, if condition 1 holds, it can be obtained that the magnitudes and the phases of the two point targets are $2M$ and $0$, respectively, i.e., the reconstructed values are  $2M$. If Condition 2 holds but Condition 1 does not, then the two reconstructed point targets have same phase $0$ and same magnitude, but their magnitudes are not the maximum; that is to say, the two reconstructed values are positive , but less than $2M$. At last, if Condition 2 does not hold, their magnitudes are not maximum and their phases are not $0$.

\section{Numerical experiments}
In this section, we present four numerical experiments to verify the theorems developed in Sections II and III. Throughout this section, the imaging area is $[\rm -0.5 \ m, 0.5\ m]\times[\rm 0.5 \ m, 1 \ m]$, the transmitting antenna is used to illuminate the imaging area with multiple frequencies, and the receiving antenna is used to acquire the reflected signals. The radar frequencies are in the range of 56.5-64 GHz with step $0.25$ GHz, i.e., $M=30$. The transmitter and the receiver are located at $(\rm -0.2 \ m, 0.1\ m)$ and $(\rm 0.2 \ m, 0.1 \ m)$ respectively.  

\begin{example}
In this example, the point target is located at $(\rm 0.1\ m,0.7\ m)$. 
By calculation, the reconstructed value of the point target is $30$, that is to say, the magnitude is  equal to $M$  and the phase is $0$, which verifies Theorem 1. We show the magnitude of the reconstructed solution in Fig.~\ref{fig:SinglePointTarget}. The location of the point target is marked with green dot in the image. 
\begin{figure}[ht]
\centering
\vspace{-3mm}
\includegraphics[width=0.9\columnwidth]{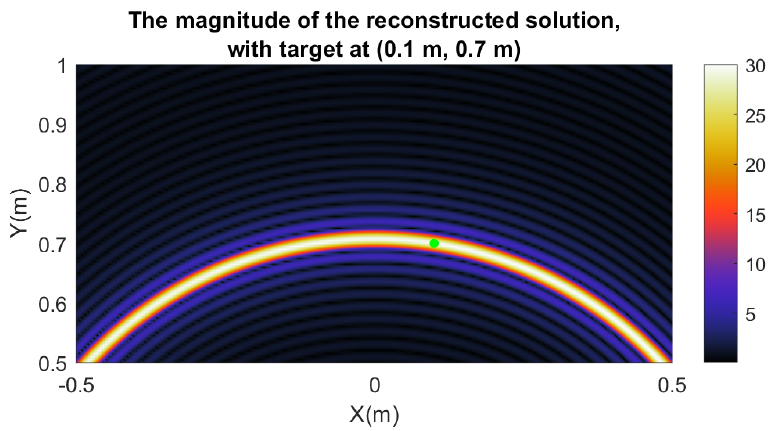}
\vspace{-3mm}
\caption{Reconstructed image of single point target at (0.1 m, 0.7 m), where the green dot represents the location of the target.\vspace{-3mm}} 
\label{fig:SinglePointTarget}
\end{figure}
\end{example}

\begin{example}
In this example, the two point targets are located at $(\rm -0.1\ m,0.6\ m)$ and $(\rm 0.3\ m, 0.8\ m)$. 
The reconstructed values obtained by the inverse crime proposed in Section III are $29.9332 - 0.7735i $ and $29.9332 + 0.7735i$, thus their magnitudes are the same, equal to $29.9432$, and their phases are opposite, equal to $-0.0258$ and $0.0258$ respectively. These results are obviously consistent with Theorem 2. The imaging result is displayed in Fig.~\ref{fig:TwoPointTargets}, where the green dots indicate the locations of the two point targets.
\begin{figure}[ht]
\centering
\vspace{-3mm}
\includegraphics[width=0.9\columnwidth]{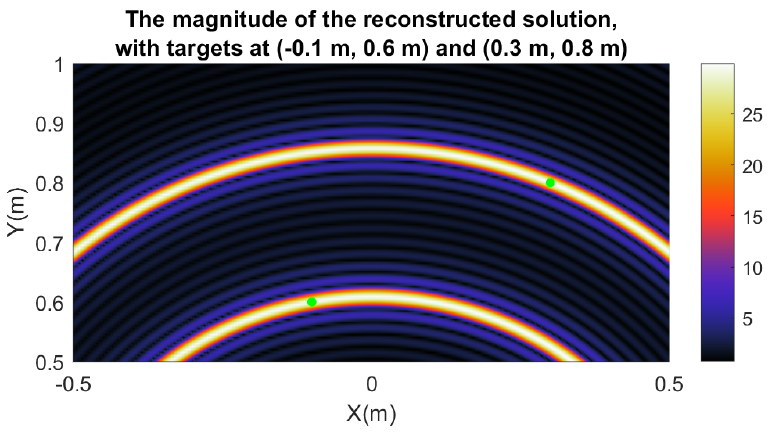}
\vspace{-3mm}
\caption{Reconstructed image of two point targets at (-0.1 m, 0.6 m) and (0.3~m, 0.8 m), where the green dots represent the locations of the targets.\vspace{-3mm}}
\label{fig:TwoPointTargets}
\end{figure}
\end{example}

\begin{example} 
In this example, we consider a special case where the two point targets locate on an  ellipse with foci at $T_x$ and $R_x$.  
The coordinates of the two point targets are $(\rm -0.25\ m, 0.75\ m)$ and $ (\rm 0.25\ m, 0.75\ m)$, thus they are symmetric and $L_p = L_q = 1.4425 \ \rm m$. It is apparent that the reconstructed values of the two point targets are both $60$, which equals to $2M$, therefore the results match Theorem 3 and Theorem 5. The imaging result can be found in
Fig.~\ref{fig:TwoPointTargetSymmetric}, where the two point targets are represented by green dots. 
\begin{figure}[ht]
\centering
\vspace{-3mm}
\includegraphics[width=0.9\columnwidth]{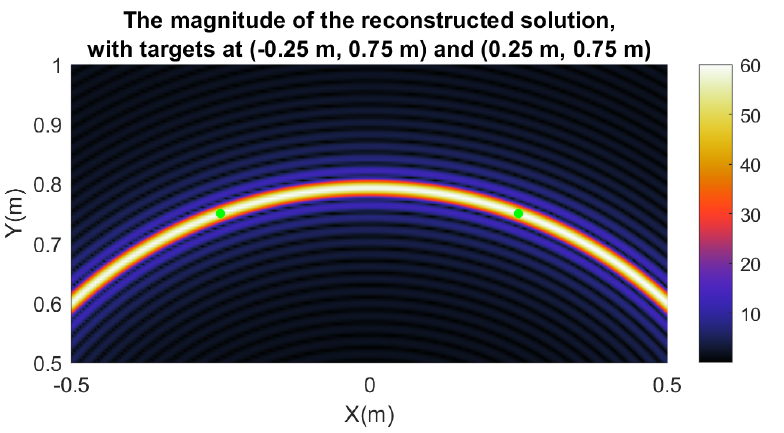}
\vspace{-3mm}
\caption{Reconstructed image of two symmetric point targets at (-0.25 m, 0.75 m) and (0.25 m, 0.75 m), where the green dots represent target locations.\vspace{-3mm}}
\label{fig:TwoPointTargetSymmetric}
\end{figure}
\end{example}

\begin{example} 
It is known from Theorem 1 and Theorem 2 that the sum of the phases of the point targets is always $0$ for the cases of one and two point targets. However, this conclusion on the phase does not hold for the cases of three and more point targets. For example, the reconstruction of three point targets at (-0.35 m, 0.9 m), (-0.25 m, 0.9 m) and (0.25 m, 0.9 m) yields that the sum of the phases of their reconstructed values is $-0.1576$. Fig.~\ref{fig:ThreePointTargets} shows the reconstructed image of this example. 
\begin{figure}[ht]
\centering
\vspace{-3mm}
\includegraphics[width=0.9\columnwidth]{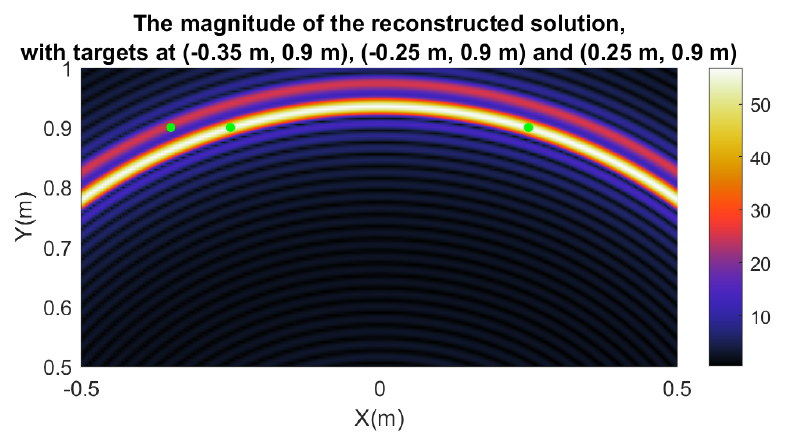}
\vspace{-3mm}
\caption{Reconstructed image of three point targets at (-0.35 m, 0.9 m), (-0.25 m, 0.9 m) and (0.25 m, 0.9 m), where the green dots represent target locations.\vspace{-3mm}}
\label{fig:ThreePointTargets}
\end{figure}
\end{example}

\section{Conclusions}
 The magnitude and phase of the reconstructed point targets in SAR imaging have been investigated via an inverse crime in this work. For single point target, it has been proved that the magnitude and phase are $M$ and $0$, respectively. For two point targets, it is demonstrated that their reconstructed values are conjugate of each other, i.e., with the same magnitude but opposite phase. Furthermore, the fact that the phase cannot be $\pi$ is addressed in Theorem 4. 
 Theorem 3 and Theorem 5
 propose Condition 1 under which the magnitude is maximum and Condition 2 under which the phase is $0$, respectively. The condition $L_p = L_q$, a special case of both Condition 1 and Condition 2, has been discussed on its effect on the magnitude and phase. Numerical experiments are also presented to verify the established theorems. Finally, for three or more point targets the sum of the phases might not be zero, although it is always zero for the cases of one and two point targets. This can be easily verified by numerical experiment.

\end{document}